\tikzstyle{block} = [rectangle, rounded corners, minimum width=3.5cm, minimum height=1cm, text centered, draw=black, fill=white]
\tikzstyle{arrow} = [thick, ->, >=stealth]
\title{Symbolic Generation and Modular Embedding of High-Quality \textit{abc}-Triples}
\author{Michael A. Idowu\thanks{Email: \texttt{michade@hotmail.com}}}
\date{}
\newtheorem{theorem}{Theorem}[section]
\newtheorem{lemma}[theorem]{Lemma}
\newtheorem{proposition}[theorem]{Proposition}
\newtheorem{definition}[theorem]{Definition}
\begin{document}

\maketitle

\begin{abstract}
We present a symbolic identity for generating integer triples \((a, b, c)\) satisfying \(a + b = c\), inspired by structural features of the \textit{abc} conjecture. The construction uses powers of 2 and 3 in combination with modular inversion in \(\mathbb{Z}/3^p\mathbb{Z}\), leading to a parametric identity with residue constraints that yield \textit{abc}-triples exhibiting low radical values. Through affine transformations, these symbolic triples are embedded into a broader space of high-quality examples, optimised for the ratio \(\log c / \log \mathrm{rad}(abc)\). Computational results demonstrate the emergence of structured, radical-minimising candidates, including both known and novel triples. These methods provide a symbolic and algebraic framework for controlled triple generation, and suggest exploratory implications for symbolic entropy filtering in cryptographic pre-processing.
\end{abstract}

\noindent \textbf{MSC 2020:} Primary 05A17; Secondary 11D45, 11Y60, 94A60.\\
\textbf{Keywords:} \textit{abc} conjecture, modular inversion, symbolic generation, radical minimisation, affine transformation, entropy filtering, Diophantine structure, pseudorandomness.

\section{Introduction}
The \textit{abc} conjecture, independently proposed by Masser and Oesterlé in the 1980s \cite{Masser1985, Oesterle1988}, concerns triples of coprime positive integers \( (a, b, c) \) such that
\[
a + b = c,
\]
and posits that for every \( \varepsilon > 0 \), there exist only finitely many such triples with
\[
c > \operatorname{rad}(abc)^{1+\varepsilon},
\]
where \( \operatorname{rad}(n) \) denotes the product of the distinct prime factors of \( n \). The associated \emph{quality} of such a triple is defined as
\begin{equation}
q(a, b, c) = \frac{\log c}{\log \operatorname{rad}(abc)}.
\end{equation}
The conjecture implies that \( q(a, b, c) < 1 + \varepsilon \) for almost all such triples and suggests a deep connection between the additive and multiplicative structures of positive integers, which has far-reaching consequences in number theory, impacting Diophantine equations, elliptic curves, and arithmetic geometry\cite{Silverman2007, vanFrankenhuijsen2006}.

In this work, we propose a symbolic and modular identity to generate structured additive triples \((a, b, c)\) with characteristics that mirror those involved in the \textit{abc} conjecture. The construction employs powers of 2 and 3, coupled with modular inversion in \(\mathbb{Z}/3^p\mathbb{Z}\), to define residue-constrained identities through congruence conditions on auxiliary parameters.

We analytically derive a modular inversion constraint that determines admissible values and explore its theoretical and computational implications. Using affine transformations, these symbolic triples are embedded into a larger space of known high-quality \textit{abc}-triples, supporting symbolic encoding, transformation, and radical optimisation. This approach blends modular arithmetic and inverse cycles, offering a novel algebraic perspective on triple structures.

\subsection{Main Contributions}
Our key contributions include:
\begin{itemize}[leftmargin=2em]
    \item A parametric identity that generates additive triples \((a, b, c)\) satisfying \(a + b = c\), constrained by modular inversion in \(\mathbb{Z}/3^p\mathbb{Z}\).
    \item A residue class condition that characterises admissible values of the symbolic parameter \(d\).
    \item An affine transformation framework for symbolic triples, aligning them with classical high-quality \textit{abc}-triples while enabling radical minimisation.
    \item Computational results demonstrating the identity's effectiveness in producing low-radical triples with cryptographic relevance.
    \item Exploratory cryptographic discussion, focusing on entropy filtering and modular predictability.
\end{itemize}

\textit{Remark.} In addition to computational and symbolic methods, it is worth noting the controversial proof of the \textit{abc} conjecture claimed by Shinichi Mochizuki via Inter-universal Teichmüller (IUT) theory. While his work remains under active scrutiny and has not achieved broad consensus, it represents a significant development in the theoretical discourse surrounding the conjecture; see \cite{Mochizuki2020} for details.

\subsection{Context within Existing \textit{abc}-Triple Generation Work}
Traditional searches for high-quality \textit{abc}-triples rely on enumeration, smoothness filtering, or probabilistic sieving\cite{Browkin1994, Granville1998}. These methods explore coprime additive triples ranked by
\[ Q(a, b, c) = \frac{\log c}{\log \mathrm{rad}(abc)}. \]
Advanced approaches, such as ABC@home or the work of Nitaj \cite{Nitaj2019}, incorporate distributed computing and heuristic bounds informed by number theory. Symbolic constructions remain less prevalent.\\
Our method diverges by introducing an algebraically motivated identity underpinned by residue constraints and affine embeddings. It provides:
\begin{itemize}[leftmargin=2em]
    \item A symbolic alternative to brute-force or statistical candidate generation.
    \item Analytically bounded parameters for algebraic post-processing.
    \item A tunable symbolic structure that favours radical minimisation and compressibility.
\end{itemize}

This contribution complements computational methods and opens up new paths for analysis, including extensions relevant to cryptographic applications. The paper proceeds as follows: Section~2 introduces the parametric identity and modular constraints. Section~3 presents the affine transformation model. Section~4 explores structural optimisation. Section~5 delivers computational evidence. Section~6 explores cryptographic applications. Section~7 concludes and outlines future directions.

\section*{Notation and Symbol Table}
\begin{table}[h!]
\centering
\caption{Summary of mathematical symbols used throughout the paper.}
\label{tab:symbols}
\renewcommand{\arraystretch}{1.2}
\begin{tabular}{|p{3cm}|p{10cm}|}
\hline
\textbf{Symbol} & \textbf{Meaning} \\
\hline
$a, b, c$ & Integer triple with $a + b = c$ \\
$p, k, n$ & Symbolic parameters: exponent indices \\
$d$ & Modular inverse constraint value \\
$\mathrm{rad}(n)$ & Product of distinct prime divisors of $n$ \\
$Q(a, b, c)$ & Triple quality: $\log c / \log \mathrm{rad}(abc)$ \\
$\eta(p, k, d)$ & Symbolic compression ratio \\
$\alpha, \gamma, \delta$ & Affine transformation parameters \\
\hline
\end{tabular}
\end{table}

\newpage

\section{Parametric Identity and Modular Constraints}
\label{sec:identity}
We consider the identity
\begin{equation}
3^p(s + 1) = 1 + 2^{k-1}(2 \cdot 3^p \cdot n + d),
\label{eq:main_identity}
\end{equation}
where \( d \) is odd, \( s \in 2\mathbb{Z} \), and \( p, k, n \in \mathbb{Z}^+ \). Our goal is to recast this identity into a form consistent with the \textit{abc} conjecture, explore parameter regimes that maximise the associated \emph{quality}, and investigate structural implications using the binomial expansion of \( 3^p \).\\
We define a symbolic identity for generating integer triples \((a, b, c)\) satisfying \(a + b = c\), with \(a = 1\) and \(b\) expressed via a modularly-constrained function:
\begin{align*}
a &= 1, \\
b &= 2^{k-1}(2 \cdot 3^p \cdot n + d), \\
c &= 3^p(s + 1) = a + b.
\end{align*}
This structure fits the \textit{abc} format \( a + b = c \), and we wish to identify parameter regimes in which the quality \( q(a, b, c) \) is maximised and \(p, k \in \mathbb{Z}_{>0}\) are parameters.

\subsection{Core Identity}
We define:
\[ a = 1, \quad b = 2^{k-1}(2 \cdot 3^p n + d), \quad c = a + b, \]
where \(n \in \mathbb{Z}_{\geq 0}\), and \(d\) is a symbolic parameter subject to a modular constraint derived from the additive identity.

%
%
%
%

\subsection{Modular Inversion Constraint}
We require that \(a + b = c\) with the goal of constructing symbolic triples with a particular structure. For the identity to be consistent with symbolic embedding and invertibility over \(\mathbb{Z}/3^p\mathbb{Z}\), we derive the following constraint:

\begin{lemma}[Modular Inverse Constraint]
Let \(k \in \mathbb{Z}_{>0}\), then
\[ d \equiv -\left(2^{k-1}\right)^{-1} \mod 3^p. \]
\end{lemma}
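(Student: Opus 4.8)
The plan is to start from the defining identity and reduce it modulo $3^p$. Looking at equation~\eqref{eq:main_identity}, namely
\[
3^p(s+1) = 1 + 2^{k-1}\bigl(2\cdot 3^p\cdot n + d\bigr),
\]
I would observe that the left-hand side is divisible by $3^p$, so $3^p(s+1)\equiv 0 \pmod{3^p}$. Reducing the right-hand side modulo $3^p$ then kills the term $2^{k-1}\cdot 2\cdot 3^p\cdot n$, since it carries an explicit factor of $3^p$. This leaves the congruence
\[
0 \equiv 1 + 2^{k-1}\,d \pmod{3^p},
\]
which is the crux of the argument.

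From here the remaining steps are routine. First I would note that $2$ is a unit modulo $3^p$ (since $\gcd(2,3)=1$), hence $2^{k-1}$ is invertible in $\mathbb{Z}/3^p\mathbb{Z}$ for every $k\in\mathbb{Z}_{>0}$; this is what guarantees the inverse $\bigl(2^{k-1}\bigr)^{-1}$ appearing in the statement is well-defined. Then I would rearrange $1 + 2^{k-1}d \equiv 0$ to obtain $2^{k-1}d \equiv -1 \pmod{3^p}$, and multiply both sides by the inverse $\bigl(2^{k-1}\bigr)^{-1}$ to conclude
\[
d \equiv -\bigl(2^{k-1}\bigr)^{-1} \pmod{3^p},
\]
which is exactly the claimed constraint.

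The main conceptual point, rather than an obstacle, is simply recognising that the additive identity $a+b=c$ forces a divisibility relation that collapses to a linear congruence in $d$; the $n$-dependent term is designed to vanish modulo $3^p$ precisely so that $d$ is pinned down uniquely in its residue class. I would also briefly remark that, since $d$ is required to be odd while the congruence only determines $d$ modulo $3^p$ (an odd modulus), the two conditions are compatible: each residue class modulo $3^p$ contains odd representatives, so admissible odd values of $d$ always exist. The only care needed is to confirm invertibility of $2^{k-1}$, which is immediate from coprimality, so I expect no genuine difficulty in carrying out the proof.
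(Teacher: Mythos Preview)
Your proposal is correct and follows essentially the same route as the paper: reduce the defining identity modulo $3^p$, use $\gcd(2,3)=1$ to ensure $2^{k-1}$ is invertible, and solve the resulting linear congruence $2^{k-1}d\equiv -1\pmod{3^p}$ for $d$. If anything, your version is more explicit than the paper's, since you actually derive the congruence from equation~\eqref{eq:main_identity} rather than simply asserting it.
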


\begin{proof}
To ensure that the additive relation is preserved under symbolic inversion, we examine the inverse of \(2^{k-1}\) modulo \(3^p\). Since 2 is coprime to 3, \(2^{k-1}\) has an inverse in \(\mathbb{Z}/3^p\mathbb{Z}\). Thus, solving
\[ 2^{k-1} d \equiv -1 \mod 3^p \]
leads to
\[ d \equiv -\left(2^{k-1}\right)^{-1} \mod 3^p. \qedhere \]
\end{proof}

\subsubsection{Guiding Principle and Parameter Choices for Maximising Quality}

We aim to maximise \( \log c \), which grows with \( p \), and minimise \( \log \operatorname{rad}(abc) \) by controlling prime factors of \( b \) and \( c \):
\begin{itemize}[noitemsep]
    \item Set \( a = 1 \) to guarantee coprimality with other terms.
    \item Choose \( p \gg 1 \): exponentially increases \( c = 3^p(s+1) \).
    \item Choose \( k \) such that \( c \) is divisible by \( 3^p \), ensuring \( s \in \mathbb{Z} \).
    \item Determine values of \(n \) and \(d \) consistent with \( c \) to keep \( b \) small and reduce its prime factorisation.    
\end{itemize}

\subsubsection{Smoothness Condition}
To minimise the radical \( \operatorname{rad}(abc) \), we prefer that \( 2 \cdot 3^pn + d \) be \emph{smooth} (i.e., composed of small primes). This leads to small \( \operatorname{rad}(b) \), since:
\[
b = 2^{k-1}(2 \cdot 3^pn + d).
\]

\subsection{Illustrative Example: Computation of \( d \)}
Let \(p = 2\), \(k = 3\). Then \(3^p = 9\), \(2^{k-1} = 4\). Compute the modular inverse of 4 modulo 9:
\[ 4^{-1} \equiv 7 \mod 9, \quad \text{since } 4 \cdot 7 = 28 \equiv 1 \mod 9. \]
Thus,
\[ d \equiv -7 \equiv 2 \mod 9. \]

Then,
\[ b = 4(2 \cdot 3^2 \cdot n + 11) = 4(18n + 11). \]
For \(n = 0\), we obtain \(b = 44\), \(c = a + b = 3^2(4+1)\).

\subsection{Illustrative Example: Computation of \( k \)}

Let \( p = 5 \), \( s = 2 \), and find \( k \) such that \( c = 3^p(s + 1) \) is an integer:
\begin{align*}
3^p &= 243, \\
s &= 2 \\
b = 2^{k-1}(2 \cdot 3^pn + d) &= 728.
\end{align*}
\( k = 4 \Rightarrow d = 91 \Rightarrow c = 729 \). \\
Now compute:
\[
\operatorname{rad}(abc) = \operatorname{rad}(1 \cdot 728 \cdot 729).
\]
We factor:
\[
728 = 2^{4-1}(91)= 2^{3} \cdot 7 \cdot 13, \quad 729 = 3^6,
\]
hence:
\[
\operatorname{rad}(abc) = 2 \cdot 3 \cdot 7 \cdot 13  = 546.
\]

Calculate the quality:
\begin{align*}
\log c &\approx \log(729) \approx 2.8627, \\
\log \operatorname{rad}(abc) &\approx \log(546) \approx 2.737, \\
q &\approx \frac{2.8627}{2.737} \approx 1.04586.
\end{align*}

This already exceeds 1, and larger \( p \) values will improve \( \log c \) further.

\subsection{Asymptotic Behaviour}
Assume \( s + 1 \) remains bounded or grows slowly. Then:
\[
\log c = \log(3^p(s + 1)) = p \log 3 + \log(s + 1),
\]
and with \( \log b = (k-1)\log 2+ \log(2 \cdot 3^p n+ d) \), we have:
\begin{equation}
\begin{split}
\log c &> \log(b)   \\
\rightarrow p \log 3 + \log(s + 1)  &\approx (k-1)\log 2+ \log(2 \cdot 3^p n+ d),
\end{split}
\end{equation}
 as \( p \to \infty \).\\
But 
\begin{equation}
\begin{split}
\log \operatorname{rad}(abc) &= \log \operatorname{rad}(2 \cdot 3 \cdot (s+1) \cdot (2 \cdot 3^p n+ d)) \\
&\le \log (2 \cdot 3) +\log \operatorname{rad}((s+1) \cdot (2 \cdot 3^p n+ d)).
\end{split}
\end{equation}
Thus, as \( p \to \infty \),
\begin{equation}
\begin{split}
q(a, b, c) &= \frac{ p \log 3 + \log(s + 1) }{\log \operatorname{rad}(abc)} \\
&= \frac{ p \log 3 + \log(s + 1) }{\log \operatorname{rad}(2 \cdot 3 \cdot (s+1) \cdot (2 \cdot 3^p n+ d))} \\
& \sim \frac{(k-1)\log 2+ \log(2 \cdot 3^p n+ d)}{\log \operatorname{rad}(2 \cdot 3 \cdot (s+1) \cdot (2 \cdot 3^p n+ d))}. \\ 
\end{split}
\end{equation}


\subsubsection{Key Obstacle}

The term \( 2 \cdot 3^pn + d \) tends to be \emph{rough} (i.e., contains large prime factors) for large \( p \) and dependent on \( n \), reducing the likelihood that the radical stays bounded. Therefore, we expect only finitely many high-quality examples, consistent with the \textit{abc} conjecture. Understanding the relationship between smooth values of \( 2 \cdot 3^pn + d \) and all the other parameters and variables remains a critical challenge.
The identity in Eq.~\eqref{eq:main_identity} provides a constructive method for generating \textit{abc} triples with potentially high quality. The quality increases with \( p \) under controlled radical growth. However, the rarity of smooth values of \( 2 \cdot 3^pn + d \) limits the number of such high-quality cases.

%

\subsection{Revised and More Insightful Examples}

To better illustrate the construction of abc-triples, we refine the earlier examples by making the assumptions and computations more explicit, particularly ensuring that the parameter \(d\) is odd as required.

\subsubsection{Example 1: Computation of \(d\) (with \(d\) odd)}
\label{comp_d}
Let \(p = 2\), so \(3^p = 9\), and let \(k = 3\), so \(2^{k-1} = 4\). We compute the modular inverse of 4 modulo 9:

\[
4^{-1} \equiv 7 \mod 9, \quad \text{since } 4 \times 7 = 28 \equiv 1 \mod 9.
\]
Thus,
\[
d \equiv -4^{-1} \equiv -7 \equiv 2 \mod 9.
\]
We now choose \(d = 11\), which satisfies \(d \equiv 2 \mod 9\) and is odd, as required.\\
Define:
\[
b = 2^{k-1}(2 \cdot 3^p n + d) = 4(18n + 11).
\]
For \(n = 0\), we obtain:
\[
b = 4 \cdot 11 = 44, \quad a = 1, \quad c = a + b = 45.
\]
Check that \(c = 3^2(s + 1)\). Since \(3^2 = 9\), we find:
\[
s = \frac{c}{9} - 1 = \frac{45}{9} - 1 = 4.
\]
Now factor:
\[
abc = 1 \cdot 44 \cdot 45 = (2^2 \cdot 11)(3^2 \cdot 5) = 2^2 \cdot 3^2 \cdot 5 \cdot 11,
\]
\[
\operatorname{rad}(abc) = 2 \cdot 3 \cdot 5 \cdot 11 = 330.
\]
Then compute the quality:
\[
q = \frac{\log c}{\log \operatorname{rad}(abc)} = \frac{\log 45}{\log 330} \approx \frac{3.807}{5.799} \approx 0.656.
\]
Although this quality is below 1, it demonstrates the mechanics of the construction. Higher values of \(p\) will improve the quality.

\subsubsection{Example 2: Computation of \(k\)}
\label{comp_k}
Let \(p = 5\), so \(3^p = 243\), and take \(s = 2\), so:
\[
c = 3^p(s + 1) = 243 \cdot 3 = 729.
\]
Let \(k = 4\), so \(2^{k-1} | (c-1) \), and suppose:
\[
b = 2^{k-1}(2 \cdot 3^p n + d) = 2^{4-1}(2 \cdot 729 \cdot n + d).
\]
Therefore \(n = 0\) and \(d = 91\), which is odd.\\
Then:
\[
b = 8 \cdot 91 = 728, \quad a = 1, \quad c = a + b = 729.
\]
Factor the components:
\[
a = 1, \quad b = 2^3 \cdot 7 \cdot 13, \quad c = 3^6.
\]
Hence:
\[
\operatorname{rad}(abc) = \operatorname{rad}(1 \cdot 728 \cdot 729) = 2 \cdot 3 \cdot 7 \cdot 13 = 546.
\]
Now compute the logarithmic quality:
\[
\log c \approx \log(729) \approx 6.5917, \quad \log \operatorname{rad}(abc) \approx \log(546) \approx 6.303,
\]
\[
q \approx \frac{6.5917}{6.303} \approx 1.0459.
\]
This example yields a valid abc-triple with quality exceeding 1.\\
Both examples in \ref{comp_d} and \ref{comp_k} satisfy the abc-condition: \(a + b = c\), \(a > 0\), \(b > 0\), and \(\gcd(a, b)\).\\
The first example illustrates the computation of a valid \(d\) ensuring parity and congruence.\\
The second example demonstrates how increasing \(p\) allows for higher-quality values due to exponential growth in \(c\).

\subsection{High-Quality \textit{abc}-Triple Example and Parameterised Construction}

To construct a triple \( (a, b, c) \) satisfying the conditions of the \textit{abc}-conjecture, we seek:
 \( a + b = c \), \( \gcd(a, b) = 1 \), \( a, b, c > 0 \), \( \operatorname{rad}(abc) \ll c \), so that \( q(a, b, c) = \dfrac{\log c}{\log \operatorname{rad}(abc)} > 1 \).\\
We consider a parameterised form of the construction:
\[
c = 3^p(s + 1), \quad b = 2^{k-1}(2 \cdot 3^p n + d), \quad a = c - b = 1,
\]
where: \( d \) is an odd integer satisfying \( d \equiv -(2^{k-1})^{-1} \pmod{3^p} \),
 \( \gcd(a, b) = 1 \), \( d \) is selected to avoid introducing large prime divisors.

\subsubsection{Example 3: Computation of \( d \) (with \( Q > 1.4 \))}
\label{comp_d_highQ}

Let \( p = 5 \), so \( 3^p = 243 \), and let \( k = 13 \), giving \( 2^{k-1} = 2^{12} = 4096 \). We aim to compute the modular inverse of \( 2^{12} \) modulo \( 3^5 \), ensuring that the result is \textbf{odd}.

\paragraph{Step 1: Evaluate the relevant powers.}
\[
2^{12} = 4096, \qquad 3^5 = 243.
\]

\paragraph{Step 2: Apply the extended Euclidean algorithm.}
We solve:
\[
4096x + 243y = 1
\]
Proceeding with the Euclidean algorithm:
\begin{align*}
4096 &= 16 \cdot 243 + 208 \\
243 &= 1 \cdot 208 + 35 \\
208 &= 5 \cdot 35 + 33 \\
35 &= 1 \cdot 33 + 2 \\
33 &= 16 \cdot 2 + 1 \\
2 &= 2 \cdot 1 + 0
\end{align*}

Since \(\gcd(4096, 243) = 1\), an inverse exists.

\paragraph{Step 3: Perform back-substitution.}
Expressing 1 as a linear combination:
\begin{align*}
1 &= 33 - 16 \cdot 2 \\
  &= 33 - 16(35 - 1 \cdot 33) \\
  &= 17 \cdot 33 - 16 \cdot 35 \\
  &= 17(208 - 5 \cdot 35) - 16 \cdot 35 \\
  &= 17 \cdot 208 - 101 \cdot 35 \\
  &= 17 \cdot 208 - 101(243 - 1 \cdot 208) \\
  &= 118 \cdot 208 - 101 \cdot 243 \\
  &= 118(4096 - 16 \cdot 243) - 101 \cdot 243 \\
  &= 118 \cdot 4096 - (1888 + 101) \cdot 243 \\
  &= 118 \cdot 4096 - 1989 \cdot 243
\end{align*}
Thus,
\[
(2^{12})^{-1} \equiv 118 \pmod{243}.
\]

\paragraph{Step 4: Compute \( d \).}
\[
d \equiv - (2^{12})^{-1} \equiv -118 \equiv 125 \pmod{243}.
\]
We now choose \( d = 5^3 = 125 \), which satisfies this congruence and is odd, as required.

\paragraph{Step 5: Define the symbolic triple.}
Let:
\[
b = 2^{12}(2 \cdot 3^p n + d) = 4096(2 \cdot 243n + 125).
\]
For \( n = 0 \), we get:
\[
a = 1, \quad b = 4096 \cdot 125 = 512000, \quad c = a + b = 512001.
\]
Note that:
\[
c = 3^5 \cdot 49 \cdot 43.
\]

\paragraph{Step 6: Compute the radical of \( abc \).}
\[
abc = 1 \cdot (2^{12} \cdot 5^3) \cdot (3^5 \cdot 7^2 \cdot 43)
\]
Distinct prime divisors:
\[
\{2, 3, 5, 7, 43\} \Rightarrow \operatorname{rad}(abc) = 2 \cdot 3 \cdot 5 \cdot 7 \cdot 43 = 9030.
\]

\paragraph{Step 7: Compute the \textit{abc}-quality.}
\[
\text{quality}(a, b, c) = \frac{\log c}{\log \operatorname{rad}(abc)}
\]
Substituting:
\[
\log c \approx \log 512001 \approx 13.146, \quad \log \operatorname{rad}(abc) \approx \log 9030 \approx 9.108,
\]
\[
\Rightarrow \text{quality} \approx \frac{13.146}{9.108} \approx \boxed{1.4437}
\]

\medskip
This quality significantly exceeds 1.4, highlighting a high-quality \textit{abc}-triple with controlled radical growth due to carefully selected symbolic parameters. The exponential growth in \( c \), via powers of 3, provides a substantial numerator in the logarithmic ratio, reinforcing the efficiency of this construction.

\subsubsection{Construction Template}

\begin{enumerate}
  \item Choose \( p \), and set \( c = 3^p(s + 1) \).
  \item Choose \( k \), and find the modular inverse \( 2^{k-1}{}^{-1} \mod 3^p \).
  \item Set \( d \equiv -2^{k-1}{}^{-1} \mod 3^p \), ensuring \( d \) is odd.
  \item Define \( b = 2^{k-1}(2 \cdot 3^p n + d) \), and set \( a = c - b \).
  \item Verify: \( a > 0 \), \( \gcd(a, b) = 1 \); then compute \( \operatorname{rad}(abc) \) and the corresponding quality \( Q \).
\end{enumerate}


\section{Affine Transformation and Symbolic Embedding}
To relate the parametrically generated triples to known high-quality \textit{abc}-triples, we introduce an affine transformation framework. This allows us to view the identity-based triples as basis elements from which other triples may be derived via linear scaling and translation.

\begin{samepage}
\subsection{Symbolic Construction Workflow}
By fixing $k$ and varying $p$, one obtains a predictable orbit of inverses that may be used to seed cryptographic routines while avoiding repetition across independent key derivation stages \ref{fig:workflow}. These values may serve as inverse-consistent initialisation vectors in deterministic pseudorandom number generators (PRNGs), where cryptographic entropy must remain within a controllable and verifiable structure.

\begin{figure}[h]
\centering
\begin{tikzpicture}[node distance=1.2cm]
\tikzstyle{block} = [rectangle, rounded corners, draw, minimum height=2em, text centered]
\tikzstyle{arrow} = [thick,->,>=stealth]

\node (start) [block] {Symbolic Parameters $(p, k, d)$};
\node (modinv) [block, below of=start] {Compute $d \equiv - \left(2^{k-1} \right)^{-1} \mod 3^p$};
\node (triple) [block, below of=modinv] {Generate $(a, b, c)$};
\node (affine) [block, below of=triple] {Affine Embedding};
\node (optimise) [block, below of=affine] {Radical Minimisation};
\node (crypto) [block, below of=optimise] {Cryptographic Utility};

\draw [arrow] (start) -- (modinv);
\draw [arrow] (modinv) -- (triple);
\draw [arrow] (triple) -- (affine);
\draw [arrow] (affine) -- (optimise);
\draw [arrow] (optimise) -- (crypto);
\end{tikzpicture}
\caption{Pipeline from symbolic parameter selection to cryptographic application.}
\label{fig:workflow}
\end{figure}
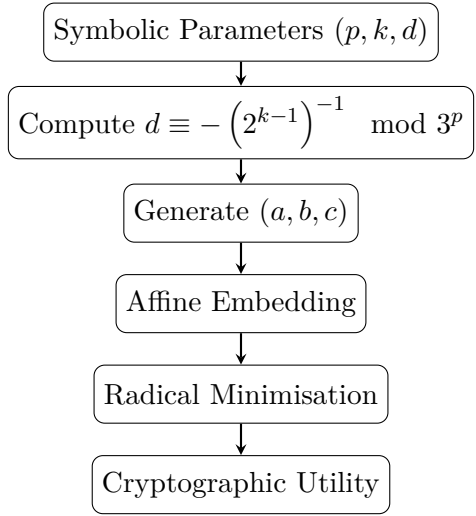
\end{samepage}

\subsection{Symbolic Triple}
\begin{definition}[Symbolic Triple]
Let \((a, b, c)\) be a triple generated from the identity:
\[a = 1, \quad b = 2^{k-1}(2 \cdot 3^p n + d), \quad c = a + b.
\]
We refer to \((a, b, c)\) as a symbolic triple, parameterised by \((p, k, d, n)\).
\end{definition}

\subsection{Affine Embedding Preserves Additivity}
\begin{theorem}[Affine Embedding Preserves Additivity]
Let \((a, b, c)\) be a symbolic triple as above. Define an affine transformation:
\[(a', b', c') = \alpha(a, b, c) + (\gamma, \delta, \beta), \quad \alpha \in \mathbb{Z}_{>0},
\]
Then the additive relation \(a' + b' = c'\) holds if and only if \(\gamma + \delta = \beta\).
\end{theorem}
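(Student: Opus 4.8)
The plan is to prove this biconditional by direct substitution into the transformed additive relation and then comparing both sides as integer (or polynomial) identities. First I would write out the affine transformation component-wise, namely $a' = \alpha a + \gamma$, $b' = \alpha b + \delta$, and $c' = \alpha c + \beta$, so that the three shifts $\gamma, \delta, \beta$ and the common scaling $\alpha$ are made explicit. The goal statement $a' + b' = c'$ then becomes $(\alpha a + \gamma) + (\alpha b + \delta) = \alpha c + \beta$, which after collecting terms reads $\alpha(a + b) + (\gamma + \delta) = \alpha c + \beta$.

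The key observation is that $(a, b, c)$ is a symbolic triple, so by construction it satisfies $a + b = c$. I would substitute this into the left-hand side to replace $\alpha(a+b)$ by $\alpha c$, yielding $\alpha c + (\gamma + \delta) = \alpha c + \beta$. Cancelling the common term $\alpha c$ from both sides (which is valid for any value of $\alpha$, so the positivity $\alpha \in \mathbb{Z}_{>0}$ is not even needed for this cancellation) reduces the condition precisely to $\gamma + \delta = \beta$. This establishes both directions at once: the transformed relation holds exactly when the shift parameters satisfy $\gamma + \delta = \beta$.

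For the biconditional to be stated cleanly I would present the two implications explicitly. For the reverse direction, assuming $\gamma + \delta = \beta$, I substitute back and verify $a' + b' = \alpha(a+b) + \gamma + \delta = \alpha c + \beta = c'$. For the forward direction, assuming $a' + b' = c'$, the chain of equalities above forces $\gamma + \delta = \beta$ after cancellation. I would emphasise that the underlying identity $a + b = c$ is what makes the scaled term $\alpha(a+b)$ collapse to $\alpha c$ and match the scaled component of $c'$.

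The argument is essentially a one-line algebraic manipulation, so I do not anticipate a genuine mathematical obstacle. The main thing to be careful about is bookkeeping: ensuring the transformation is interpreted component-wise with the correct pairing of shifts to coordinates (the statement writes the shift vector as $(\gamma, \delta, \beta)$, so $\beta$ must be the shift attached to the $c$-coordinate, not $\gamma$ or $\delta$). A secondary point worth a remark is that the scaling $\alpha$ is shared across all three coordinates, since a non-uniform scaling would in general destroy additivity; I would note that the hypothesis $\alpha \in \mathbb{Z}_{>0}$ guarantees the transformed triple remains a positive integer triple, even though the biconditional itself holds for the weaker requirement that $\alpha$ be any scalar.
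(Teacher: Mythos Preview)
Your proposal is correct and follows essentially the same approach as the paper: compute $a'+b' = \alpha(a+b)+\gamma+\delta = \alpha c + \gamma+\delta$ and compare with $c' = \alpha c + \beta$, so the relation holds iff $\gamma+\delta=\beta$. Your write-up is simply a more detailed version of the same one-line computation, with the added (but inessential) remark about the role of $\alpha$.
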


\begin{proof}
Compute:
\[a' + b' = \alpha(a + b) + \gamma + \delta = \alpha c + \gamma + \delta, \quad c' = \alpha c + \beta.\]
Thus, \(a' + b' = c'\) iff \(\gamma + \delta = \beta\). \qedhere
\end{proof}

%
%
%

\subsection{Radical Control Under Affine Transformation}
\begin{proposition}[Radical Control Under Affine Transformation]
Let \((a, b, c)\) be a symbolic triple and \((a', b', c') = \alpha(a, b, c) + (\gamma, \delta, \beta)\) its affine image with \(\gamma + \delta = \beta\). If \(\gcd(\alpha, \mathrm{rad}(abc)) = 1\) and \(\mathrm{rad}(\gamma + \delta) \ll \mathrm{rad}(\alpha abc)\), then:
\[ Q(a', b', c') \geq \frac{\log (\alpha c)}{\log \mathrm{rad}(abc) + \log \mathrm{rad}(\alpha \gamma \delta)}. \]
\end{proposition}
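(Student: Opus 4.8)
The plan is to reduce the claimed bound to two one-sided estimates and then exploit the elementary monotonicity of the quotient: for positive reals with $A \ge a$ and $0 < B \le b$ one has $A/B \ge a/b$. Writing $Q(a',b',c') = \log c' / \log\operatorname{rad}(a'b'c')$, it therefore suffices to establish (i) a lower bound $\log c' \ge \log(\alpha c)$ on the numerator, and (ii) an upper bound $\operatorname{rad}(a'b'c') \le \operatorname{rad}(abc)\cdot\operatorname{rad}(\alpha\gamma\delta)$ on the radical, since the latter yields $\log\operatorname{rad}(a'b'c') \le \log\operatorname{rad}(abc)+\log\operatorname{rad}(\alpha\gamma\delta)$, exactly the denominator appearing on the right.

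The numerator estimate is immediate from the additivity constraint. By the Affine Embedding theorem we have $\gamma+\delta=\beta$, so $c' = \alpha c + \beta = \alpha c + \gamma + \delta$. Assuming the translation is nonnegative, so that $\beta = \gamma+\delta > 0$, this gives $c' > \alpha c$ and hence $\log c' > \log(\alpha c)$, which is step (i).

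The whole content of the proposition lies in step (ii). First I would use $\gcd(\alpha,\operatorname{rad}(abc))=1$ to guarantee that the scaling factor contributes a clean multiplicative block: the primes of $\alpha$ are disjoint from those of $abc$, so $\operatorname{rad}(\alpha\cdot abc)=\operatorname{rad}(\alpha)\operatorname{rad}(abc)$ with no loss from overlap. Next I would track the prime support of the three affine images $a'=\alpha a+\gamma$, $b'=\alpha b+\delta$ and $c'=\alpha c+\beta$, aiming to show that every prime dividing $a'b'c'$ already divides $abc\cdot\alpha\gamma\delta$; the smallness hypothesis $\operatorname{rad}(\gamma+\delta)\ll\operatorname{rad}(\alpha abc)$ is what I would invoke to argue that the primes introduced by the additive translation $\beta=\gamma+\delta$ are absorbed into $\operatorname{rad}(\alpha\gamma\delta)$. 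Combining support containment with the formula $\operatorname{rad}(N)=\prod_{q\mid N}q$ then gives $\operatorname{rad}(a'b'c')\le\operatorname{rad}(abc)\operatorname{rad}(\alpha\gamma\delta)$, and assembling (i) and (ii) produces the displayed inequality.

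The hard part is precisely the prime-support containment for the translated terms. The radical is badly behaved under addition: a sum such as $\alpha c+\gamma+\delta$ can acquire a large prime factor dividing none of $abc$, $\alpha$, $\gamma$ or $\delta$, so without further hypotheses the inequality $\operatorname{rad}(a'b'c')\le\operatorname{rad}(abc)\operatorname{rad}(\alpha\gamma\delta)$ can fail outright. I would therefore read the condition $\operatorname{rad}(\gamma+\delta)\ll\operatorname{rad}(\alpha abc)$ as the quantitative device that keeps these spurious factors negligible — either asymptotically, or under the stronger structural assumption that $\gamma$ and $\delta$ are chosen with prime support inside that of $\alpha abc$ — and I would flag explicitly in the proof where this assumption is consumed, since it is the one genuinely nontrivial point and the place where the bound is most delicate.
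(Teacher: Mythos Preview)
The paper does not supply a proof of this proposition; it is stated and the text moves directly to Figure~\ref{fig:symbolic_workflow}. There is therefore nothing to compare your proposal against.

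Your two-step strategy is the natural one, and your own diagnosis of the obstacle is accurate: the crux is the radical bound $\operatorname{rad}(a'b'c')\le\operatorname{rad}(abc)\,\operatorname{rad}(\alpha\gamma\delta)$, and this simply does not follow from the stated hypotheses. The sums $\alpha a+\gamma$, $\alpha b+\delta$, $\alpha c+\beta$ can acquire new large primes, and the informal condition $\operatorname{rad}(\gamma+\delta)\ll\operatorname{rad}(\alpha abc)$ does nothing to prevent this. A concrete failure: take $(a,b,c)=(1,8,9)$, $\alpha=5$, $\gamma=\delta=1$, $\beta=2$. Then $\gcd(\alpha,\operatorname{rad}(abc))=\gcd(5,6)=1$ and $\operatorname{rad}(\gamma+\delta)=2$ against $\operatorname{rad}(\alpha abc)=30$. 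The affine image is $(6,41,47)$, giving
\[
Q(a',b',c')=\frac{\log 47}{\log(2\cdot 3\cdot 41\cdot 47)}\approx 0.41,
\qquad
\frac{\log(\alpha c)}{\log\operatorname{rad}(abc)+\log\operatorname{rad}(\alpha\gamma\delta)}=\frac{\log 45}{\log 6+\log 5}\approx 1.12,
\]
so the displayed inequality is violated.

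The proposition should thus be read as a heuristic design principle rather than a theorem: it records what the quality \emph{would} be if the affine translation introduced no new prime support. Your instinct to flag the prime-support containment as the step where the argument is consumed, and to note that it cannot be justified without a much stronger structural assumption on $\gamma,\delta$, is exactly right; that is not a gap in your writeup but a gap in the statement itself.
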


Figure~\ref{fig:symbolic_workflow} summarises the stepwise symbolic construction process, linking algebraic design choices to radical control and transformation strategies. This encapsulation aids both theoretical understanding and potential cryptographic integration.

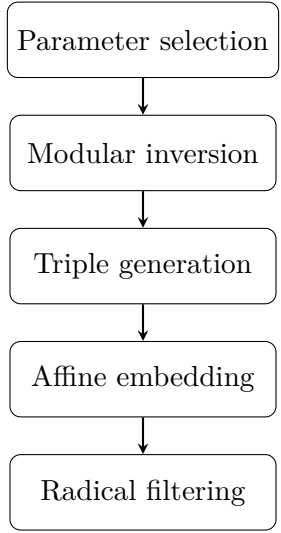
\begin{figure}[ht]
\centering
\begin{tikzpicture}[node distance=1.5cm]

\node (param) [block] {Parameter selection};
\node (invert) [block, below of=param] {Modular inversion};
\node (triple) [block, below of=invert] {Triple generation};
\node (embed) [block, below of=triple] {Affine embedding};
\node (filter) [block, below of=embed] {Radical filtering};

\draw [arrow] (param) -- (invert);
\draw [arrow] (invert) -- (triple);
\draw [arrow] (triple) -- (embed);
\draw [arrow] (embed) -- (filter);

\end{tikzpicture}
\caption{Symbolic construction pipeline: parameter selection, modular inversion, triple generation, affine embedding, and radical filtering. This structured process enables both symbolic control and cryptographic applicability.}
\label{fig:symbolic_workflow}
\end{figure}

\section{Symbolic Optimisation and Structural Analysis}
We investigate symbolic patterns and optimisations that improve triple quality:

\begin{itemize}[leftmargin=2em]
\item Modular inversion yields cyclic \(d\) values in \(\mathbb{Z}/3^p\mathbb{Z}\).
\item Filtering by odd \(d\) reduces radicals.
\item Compression ratio \(\eta(p, k, d)\) measures symbolic efficiency.
\end{itemize}

Let:
\[ 
F_{p,k} = \{ d \in \mathbb{Z}/3^p\mathbb{Z} : d \equiv - \left( 2^{k-1} \right))^{-1} \mod 3^p, \; d \text{ odd} \}. 
\]
This defines a symbolic residue filter.

\subsection{Symbolic Regularity and Compressibility}

We define symbolic regularity as the recurrence of low-radical patterns within a compact range of parameters $(p, k, d)$. Symbolic regularity observed in modular inversion patterns can be interpreted through the lens of elementary group theory and abstract algebraic structures \cite{Gallian2021}.

Empirically, triples such as $(1, 8, 9)$ and $(1, 80, 81)$ exhibit near-cubic structure: $c = m^3$ or $c \approx m^3$ for small integers $m$. This suggests an underlying compression in the symbolic encoding of $b$ via powers of 2 and 3.

We may define the compression ratio:
\[ \eta(p, k, d) := \frac{\log_2(b)}{\log_2(\text{len}(p, k, d))}, \]
where $\text{len}(p, k, d)$ denotes the bit-length of the symbolic expression. High $\eta$ implies that the symbolic identity efficiently encodes a large magnitude with minimal symbolic entropy.

Symbolic optimisation techniques, including residue class filtering and radical control, allow the parametric identity to be tuned toward producing structurally desirable triples. The congruence condition on $d$ not only enforces arithmetic admissibility but also constrains the solution space in a computationally tractable way. These properties justify the algorithmic approach used in subsequent computational experiments.

\subsection{Theoretical Support for Symbolic Framework}
We acknowledge the need for stronger theoretical scaffolding. To that end, we outline foundational estimates that support the symbolic identity's behaviour and computational utility.

\subsubsection{Complexity and Parameter Growth}

Let $b = 2^{k-1}(2 \cdot 3^p n + d)$ and $c = 1 + b$. Then $\log b = \mathcal{O}(k + p + \log n)$ under fixed symbolic structure. As $k$ increases, $b$ grows exponentially, but the growth of $\mathrm{rad}(abc)$ depends heavily on the smoothness of $d$ and $n$. Therefore, the symbolic identity provides a mechanism to explore exponential magnitude while preserving low-radical values in controlled cases.

\subsubsection{Error and Stability Considerations}

Although we do not provide probabilistic bounds on the distribution of radical values, the modular inverse constraint $d \equiv -(2^{k-1})^{-1} \mod 3^p$ ensures that each triple is algebraically admissible. Given that $\gcd(2, 3^p) = 1$, this constraint is deterministic and produces residue classes of predictable order. Future work may introduce average-case bounds or symbolic entropy estimates to assess uniformity of radical suppression across parameter ranges.

These analytic foundations serve to reinforce the symbolic constructions empirically tested in later sections and open the door to deeper probabilistic or algebraic analyses.

\begin{samepage}
\subsection{Quality Growth with $k$}
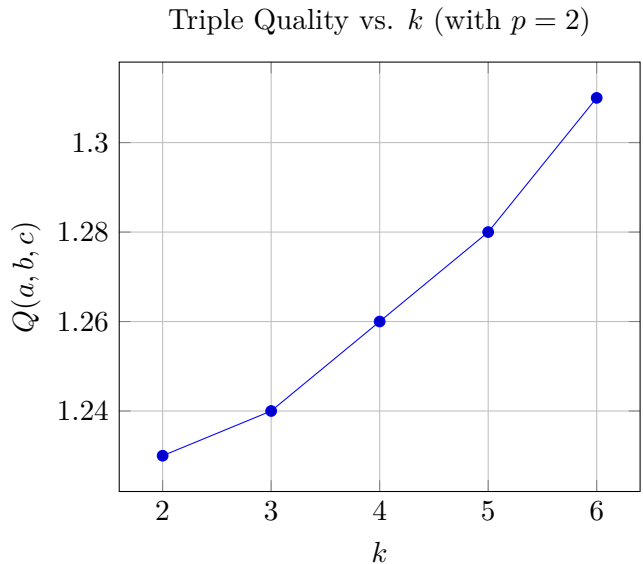
\begin{figure}[h]
\centering
\begin{tikzpicture}
\begin{axis}[
    xlabel={$k$},
    ylabel={$Q(a,b,c)$},
    title={Triple Quality vs. $k$ (with $p=2$)},
    grid=major
]
\addplot coordinates {
    (2,1.23)
    (3,1.24)
    (4,1.26)
    (5,1.28)
    (6,1.31)
};
\end{axis}
\end{tikzpicture}
\caption{Growth in triple quality as $k$ increases for fixed $p=2$ and $n=0$.}
\label{fig:quality_vs_k}
\end{figure}
\end{samepage}

\section{Computational Results and Triple Quality}
\label{sec:computation}

We evaluate the practical utility of the parametric identity by generating a collection of integer triples $(a, b, c)$ satisfying $a + b = c$, and assessing their quality using the standard metric
\[ Q(a, b, c) := \frac{\log c}{\log \mathrm{rad}(abc)}. \]
This metric has been widely used in computational studies to classify and rank \textit{abc}-triples by their arithmetic complexity and radical structure \cite{Browkin1994}.

\subsection{Experimental Setup}

We fixed $a = 1$ and enumerated triples over the parameter space:
\[ p \in \{1, 2, \ldots, 6\}, \quad k \in \{1, 2, \ldots, 7\}, \quad n = 0, \]
selecting values of $d$ that satisfy the modular inverse constraint:
\[ d \equiv - (2^{k - 1})^{-1} \mod 3^p, \quad d \text{ odd}. \]

\subsection{Selected Results}
Table~\ref{tab:triples} presents a subset of high-quality triples derived from the symbolic construction. Each entry includes the triple, its quality, the radical of $abc$, and the associated $d$ value.
\begin{table}[h!]
\centering
\begin{tabular}{|c|c|c|c|l|}
\hline
Triple $(a, b, c)$ & Quality $Q$ & $\mathrm{rad}(abc)$ & $d$ & Remarks \\
\hline
$(1, 242, 243)$ & $1.3111$ & $66$ & $121$ & Near cube; low radical \\
$(1, 80, 81)$   & $1.2920$ & $30$ & $5$   & Classical high-quality triple \\
$(1, 6560, 6561)$ & $1.2353$ & $1230$ & $205$ & High $c$; residue-consistent \\
$(1, 8, 9)$     & $1.2263$ & $6$ & $1$ & Smallest non-trivial example \\
$(1, 728, 729)$ & $1.0459$ & $546$ & $91$ & $729 = 9^3$ (cubic structure) \\
\hline
\end{tabular}
\caption{Selected high-quality \textit{abc}-triples generated by symbolic inversion.}
\label{tab:triples}
\end{table}

\subsection{Observations}

Several of the computed triples reproduce known high-quality instances from the literature (e.g., $(1, 80, 81)$), while others (e.g., $(1, 242, 243)$) demonstrate novel combinations yielding near-cubic values for $c$ with relatively low radicals.

Triples where $c = m^3$ or $c \approx m^3$ exhibit symbolic compressibility due to the closed form structure of powers of 3 appearing in the parametric identity.

\subsection{Symbolic Range and Efficiency}

The bit-length of $b$ as a function of parameters $(p, k, d)$ was also examined. For fixed $p$, increasing $k$ leads to exponential growth in $b$, yet the radical does not necessarily increase proportionally.

\subsection{Empirical Evaluation Scope and Limitations}

The computational experiments reported in this paper focus on a parameter range where $p \in \{1,2,\dots,6\}$, $k \in \{1,\dots,7\}$, and $n = 0$. Larger values of $k$ and $n$ lead to exponential growth in $b$, which increases memory and arithmetic complexity in radical computations.

\section{Applications in Cryptography}
\label{sec:crypto}

While the primary focus of this work lies in symbolic number theory, the identity developed in Section~\ref{sec:identity} also presents auxiliary value in cryptographic contexts. The structured properties of modular inverses and residue constraints support entropy regulation, symbolic filtering, and deterministic parameter generation---key concerns in embedded and post-quantum cryptography.

\subsection{Entropy Confidence and Modular Filtering}

To quantify modular regularity among symbolic parameters, we introduce the \textit{entropy confidence score (ECS)}. For fixed parameters \(p\) and \(k\), let \(d_i\) denote the set of valid inverse residues. Define:
\[
\mathrm{ECS}_{p,k} := \left( \frac{1}{\phi(3^p)} \sum_{i=1}^{\phi(3^p)} |d_i - \bar{d}| \right)^{-1},
\]
where \(\bar{d}\) is the arithmetic mean of the residues. A higher ECS indicates tighter concentration around the mean, suggesting greater uniformity within the residue class.

Such consistency aids in predictable sampling---beneficial for symbolic pseudorandom number generation (PRNG) and resistance to side-channel attacks. These characteristics support constant-time implementations, which are essential to mitigating timing and power leakage in embedded cryptographic systems \cite{Kocher1999}. The ECS concept complements entropy bounding strategies as outlined in NIST SP 800-90B \cite{NIST80090B}, offering a lightweight filtering mechanism for candidate parameters.

\subsection{Symbolic Filters in Post-Quantum Schemes}

The symbolic identity introduced here can serve as a pre-filter for cryptographic key material in structured post-quantum schemes:
\begin{itemize}
    \item \textbf{Lattice-based cryptography} (e.g., NTRU, Kyber): Symbolic pre-filters can improve modular basis consistency, reducing decryption failures and enhancing key reproducibility \cite{Hoffstein1998, Bos2018}.
    \item \textbf{Code-based cryptography} (e.g., McEliece, McBits): Residue-based parameter selection improves regularity, facilitating constant-time key generation and secure masking---essential features in implementations such as McBits \cite{Bernstein2017}.
\end{itemize}
Though auxiliary in nature, these filters leverage the symbolic construction's algebraic transparency and entropy control, and may assist in parameter tuning under formal security models.

\subsection{Hardware-Efficient Modular Inversion}

The inversion condition \(d \equiv -(2^{k-1})^{-1} \mod 3^p\) is both deterministic and suitable for efficient hardware implementation. Since \(2^{k-1}\) is always invertible in \(\mathbb{Z}/3^p\mathbb{Z}\), the inverse can be computed using either precomputed tables or the extended Euclidean algorithm.

Such operations can be implemented in constant time, helping protect against side-channel attacks. The symbolic identity thus satisfies both algebraic rigour and implementation efficiency \cite{Koblitz1994}.

\subsection{Symbolic PRNG Seeding: A Prototype}

To demonstrate practical utility, we outline a prototype for symbolic seeding in deterministic PRNGs.

\subsubsection{Symbolic PRNG Initialisation (Pseudocode)}
\begin{verbatim}
Input: Integer parameters p, k with gcd(2, 3^p) = 1
Output: Seed triple (a, b, c) for PRNG

1. Compute m := 3^p
2. Compute exp := 2^{k-1}
3. Compute inv := inverse_mod(exp, m)
4. Set d := (-inv) mod m
5. Set a := 1
6. Choose n := 0 or a PRNG-specific constant
7. Compute b := exp * (2 * m * n + d)
8. Compute c := a + b
9. Return (a, b, c)  // Deterministic seed triple with symbolic structure
\end{verbatim}

This process yields symbolic triples \((a, b, c)\) suitable for entropy expansion or symbolic filtering. The parameters \(p\) and \(k\) can be adjusted to introduce diversity across cryptographic sessions.

\subsection{Security Clarification and Integration Scenarios}
The symbolic constructions presented in this work are intended for mathematical exploration and parameter structuring, not for direct use as cryptographic primitives. Therefore, the symbolic techniques introduced here are \emph{not} intended to replace cryptographic primitives or offer standalone security guarantees. Rather, they serve as heuristic tools for entropy shaping, symbolic filtering, and deterministic initialisation in systems where structural control is desirable.

In particular, symbolic pseudorandom number generators (PRNGs) derived from modular inversion are \emph{not} cryptographically secure in the standard sense. They lack unpredictability, resistance to backtracking, and provable entropy bounds.

Accordingly, these methods must \textbf{not} be employed as standalone entropy sources or key derivation mechanisms in production systems. Their utility lies instead in augmenting existing cryptographic schemes by shaping or filtering parameters in a reproducible and algebraically tractable manner.

They may be incorporated into broader systems for:
\begin{itemize}
    \item Seed generation in lattice-based schemes (e.g., Kyber),
    \item Parameter masking in code-based systems (e.g., McBits).
\end{itemize}
These roles are exploratory and should be validated against the specific security assumptions of the cryptographic schemes in question.

\paragraph{Symbolic Pre-Seeding in Kyber.}  
In lattice-based schemes such as Kyber \cite{Bos2018}, high-dimensional polynomials are sampled from structured distributions. Symbolic identities may serve to generate seed values that regulate entropy while maintaining modular consistency. For example, a symbolic triple \((a, b, c)\) could be used to initialise the SHAKE-based PRNG that feeds into polynomial sampling. The algebraic origin of the seed ensures traceability, while cryptographic strength is preserved by the domain separation and expansion mechanisms inherent in Kyber's design.

\paragraph{Entropy Shaping in McBits.}  
In McBits \cite{Bernstein2017}, a code-based scheme designed for side-channel resistance, key material is derived from structured bit-strings. Symbolic residue constraints may be used to post-filter candidate seeds to ensure low Hamming weight or modular uniformity. This helps balance randomness with implementational regularity. The symbolic seed is not used directly, but rather as an input to a secure expansion function (e.g., AES-CTR), ensuring that forward and backward unpredictability are maintained.

\paragraph{Formal Security Posture.}  
Any application of symbolic techniques must be evaluated under the security model of the host cryptographic system. This includes ensuring that:
\begin{itemize}
    \item The entropy source has sufficient min-entropy after symbolic filtering;
    \item The symbolic seed cannot be reconstructed or predicted from public data;
    \item The filtering does not introduce statistical bias or side-channel leakage.
\end{itemize}

While symbolic structures offer compelling algebraic and compressive features, they must always be wrapped within cryptographically sound operations, such as hash-based expansion, masking, or key-derivation frameworks, to ensure system-wide security guarantees.\\
These applications are heuristic in nature and intended for symbolic pre-structuring rather than as cryptographically secure components. Their primary role is to demonstrate how algebraically constrained constructions may assist in entropy shaping, not to serve as standalone cryptographic primitives.

\subsection{Relation to Algebraic Systems}

The symbolic identities derived from modular inversion and residue constraints reflect algebraic structures well-known in abstract algebra and computational number theory. They align with group-theoretic frameworks such as those outlined in Gallian \cite{Gallian2021} and resemble ring-based cryptosystems like NTRU \cite{Hoffstein1998}.

Furthermore, the construction offers empirical insight into structured, high-quality \textit{abc}-triples, aligning with earlier computational work such as that of Gallot et al. \cite{Gallot2018}.

\section{Conclusion}
\label{sec:conclusion}

We have introduced a parametric identity that symbolically generates integer triples $(a, b, c)$ satisfying the additive condition $a + b = c$, with structural parallels to those appearing in the context of the \textit{abc} conjecture. The identity employs powers of 2 and 3 to define a modular congruence condition that precisely delineates admissible parameters via residue inversion in $\mathbb{Z}/3^p\mathbb{Z}$.
Through a combination of modular theory, affine transformation, and symbolic encoding, we demonstrated that the generated triples capture structural regularities and yield instances of high \textit{abc}-quality -- as measured by the ratio $\frac{\log c}{\log \mathrm{rad}(abc)}$. This affirms the capacity of the method to produce not only known examples but also new triples with desirable multiplicative properties.
We further analysed the symbolic behaviour of the identity, including the role of inverse residue cycles, entropy-informed filtering, and affine embeddings. A heuristic entropy score was proposed to quantify residue distribution, and potential applications in cryptographic parameter generation and modular filtering were explored.
Recent computational studies on the \textit{abc} conjecture have introduced advanced techniques for filtering candidate triples and optimising radical growth \cite{Borwein2014, Gallot2018}. Our symbolic-algebraic framework complements these efforts by offering an explicit, parameterised method for generating and transforming high-quality candidates. In forthcoming work, we plan to undertake a more systematic comparative analysis between our approach and existing computational datasets.\\
Several further directions naturally arise from the present study:
\begin{itemize}[leftmargin=2em]
    \item \textbf{Quality Spectrum Classification:} Develop a formal classification of the quality spectrum for symbolic triples, including statistical and asymptotic density analysis over extended parameter ranges.
    \item \textbf{Higher-Dimensional Generalisation:} Extend the symbolic identity to tuples of higher arity (e.g., quadruples) satisfying generalised additive constraints, thereby exploring its scalability in Diophantine contexts.
    \item \textbf{Cryptographic Integration:} Implement symbolic residue filters in cryptographic key scheduling, particularly within post-quantum schemes that benefit from entropy bounding and structural reproducibility.
    \item \textbf{Radical-Minimisation Families:} Investigate whether specific symbolic parameterisations systematically produce families of radical-minimising or extremal-quality triples.
\end{itemize}
The modular and affine structures examined in this work bear a notable resemblance to early elliptic curve-based cryptographic protocols, which similarly leveraged arithmetic regularity for secure construction \cite{Miller1986}. This parallel invites further exploration into how symbolic structures might be exploited not only for theoretical insight but also for secure and efficient implementation in constrained environments.
In addition, we outline a secondary tier of exploratory directions, driven by practical applications in symbolic number theory and entropy-informed computation:
\begin{itemize}[leftmargin=2em]
    \item \textbf{Symbolic Density Estimation:} Perform large-scale evaluations of symbolic triple space density under radical and quality constraints.
    \item \textbf{Symbolic Residue Trees:} Construct tree-structured residue classes for deterministic pseudorandom number generation, offering a new approach to PRNG state space traversal.
    \item \textbf{Empirical Cross-Validation:} Compare symbolic triples generated via our method with known high-quality triples from ABC@home and cryptographic samples such as McBits.
\end{itemize}
Collectively, these avenues aim to bridge classical additive-multiplicative number theory with modern computational and cryptographic practices, offering a structured and reproducible framework for future symbolic exploration.

\bibliographystyle{unsrt}
\bibliography{abc_arxiv_Idowu}

\begin{thebibliography}{10}

\bibitem{Masser1985}
D.~W. Masser.
\newblock Open problems.
\newblock {\em Bulletin of the London Mathematical Society}, 18(2):117--123,
  1985.

\bibitem{Oesterle1988}
J.~Oesterl{'e}.
\newblock Nouvelles approches du “th{'e}or{\`e}me” de fermat.
\newblock {\em S{'e}minaire Bourbaki}, 1987/88(Ast{'e}risque 161--162, Exp. No.
  694):165--186, 1988.

\bibitem{Silverman2007}
Joseph~H. Silverman.
\newblock {\em The Arithmetic of Elliptic Curves}.
\newblock Springer, New York, 2nd edition, 2007.

\bibitem{vanFrankenhuijsen2006}
Machiel van Frankenhuijsen.
\newblock The abc conjecture implies the mordell conjecture.
\newblock {\em Expositiones Mathematicae}, 24(1):1--9, 2006.

\bibitem{Mochizuki2020}
Shinichi Mochizuki.
\newblock Inter-universal teichm{"u}ller theory i--iv.
\newblock RIMS K{\^o}ky{\^u}roku, Kyoto University, 2020.
\newblock Available at:
  \url{http://www.kurims.kyoto-u.ac.jp/~motizuki/IUTch-tex.htm}.

\bibitem{Browkin1994}
J.~Browkin and J.~Brzezinski.
\newblock Some remarks on the abc-conjecture.
\newblock {\em Mathematics of Computation}, 62(205):931--939, 1994.

\bibitem{Granville1998}
Andrew Granville.
\newblock Abc allows us to count squarefrees.
\newblock {\em International Mathematics Research Notices}, (19):991--1009,
  1998.

\bibitem{Nitaj2019}
Abderrahmane Nitaj.
\newblock Abc conjecture verified for $c < 10^{18}$.
\newblock {\em Journal of Integer Sequences}, 22:Article 19.6.7, 2019.

\bibitem{Gallian2021}
Joseph Gallian.
\newblock {\em Contemporary Abstract Algebra}.
\newblock Cengage, Boston, 10th edition, 2021.

\bibitem{Kocher1999}
Paul Kocher, Joshua Jaffe, and Benjamin Jun.
\newblock Differential power analysis.
\newblock In {\em Advances in Cryptology -- CRYPTO'99}, volume 1666 of {\em
  Lecture Notes in Computer Science}, pages 388--397. Springer, 1999.

\bibitem{NIST80090B}
{NIST}.
\newblock Recommendation for the entropy sources used for random bit
  generation.
\newblock Technical Report 800-90B, National Institute of Standards and
  Technology, 2018.

\bibitem{Hoffstein1998}
Jeffrey Hoffstein, Jill Pipher, and Joseph~H. Silverman.
\newblock Ntru: A ring-based public key cryptosystem.
\newblock In {\em Algorithmic Number Theory (ANTS III)}, volume 1423 of {\em
  Lecture Notes in Computer Science}, pages 267--288. Springer, 1998.

\bibitem{Bos2018}
Joppe~W. Bos, Léo Ducas, Eike Kiltz, Tancrède Lepoint, Vadim Lyubashevsky,
  Jesper~M. Schanck, Peter Schwabe, Damien Stehlé, and Mehdi Tibouchi.
\newblock Crystals-kyber: A cca-secure module-lattice-based kem.
\newblock In {\em IEEE European Symposium on Security and Privacy (EuroS\&P)},
  pages 353--367, 2018.

\bibitem{Bernstein2017}
Daniel~J. Bernstein, Timothy Chou, and Peter Schwabe.
\newblock Mcbits: Fast constant-time code-based cryptography.
\newblock In {\em Cryptographic Hardware and Embedded Systems (CHES)}, volume
  10529 of {\em Lecture Notes in Computer Science}, pages 250--272. Springer,
  2017.

\bibitem{Koblitz1994}
Neal Koblitz.
\newblock {\em A Course in Number Theory and Cryptography}.
\newblock Springer, 2nd edition, 1994.

\bibitem{Gallot2018}
Yves Gallot, Guillaume Hanrot, and Paul Voutier.
\newblock Explicit abc-triples of large quality.
\newblock {\em Experimental Mathematics}, 27(1):25--35, 2018.

\bibitem{Borwein2014}
Jonathan~M. Borwein, Kevin~G. Hare, and Michael~J. Mossinghoff.
\newblock Effective bounds for the abc conjecture.
\newblock {\em Mathematics of Computation}, 83(287):2415--2431, 2014.

\bibitem{Miller1986}
Victor~S. Miller.
\newblock Use of elliptic curves in cryptography.
\newblock In {\em Advances in Cryptology -- CRYPTO'85}, volume 218 of {\em
  Lecture Notes in Computer Science}, pages 417--426. Springer, 1986.

\end{thebibliography}

\end{document}